\tikzstyle{black node}=[draw,circle,fill=black, minimum size=3pt, inner sep=0pt]	
\newtheorem{thm}{Theorem}[section]
\newtheorem{cor}[thm]{Corollary}
\theoremstyle{remark}
\newtheorem{rem}[thm]{Remark}
\theoremstyle{definition}
\author{Matthew P. Johnson}
\affil{Pennsylvania State University}
\author{Den{\.{i}}z Sar{\i}{\"{o}}z\thanks{Research supported by grants from NSA (47149-0001) and \mbox{PSC-CUNY} (63427-0041).}}
\affil{The Graduate School and University Center of The City University of New York}
\title{Computing the Obstacle Number of a Plane Graph}
\begin{document}
\maketitle
\thispagestyle{empty}
\begin{abstract}
An \emph{obstacle representation} of a plane graph $G$ is $V(G)$ together with a set of opaque polygonal obstacles such that $G$ is the visibility graph on $V(G)$ determined by the obstacles.  
We investigate the problem of computing an obstacle representation of a plane graph (ORPG) with a minimum number of obstacles.
We call this minimum size the \emph{obstacle number} of $G$.

First, we show that ORPG is NP-hard by reduction from planar vertex cover, resolving a question posed by \cite{SariozCCCG11}.   
Second, we give a reduction from ORPG to maximum degree 3 planar vertex cover. Since this reduction preserves solution values, it follows that ORPG is fixed parameter tractable (FPT) and admits a polynomial-time approximation scheme (PTAS).
\end{abstract}

\section{Introduction}
Let $G$ be a plane graph with straight edges and vertices in general position; that is, a straight-line drawing of a planar graph with no edge crossings and no three vertices on a line in which the vertices are identified with their positions.
We refer to the open line segment between a pair of non-adjacent graph vertices as a \emph{non-edge} of $D$.
An \emph{obstacle representation} of $G$ is a pair $(V(G), \mathcal{O})$ where $\mathcal{O}$ is a set of polygons (not necessarily convex) called \emph{obstacles}, 
such that:
\begin{enumerate}
\item
$G$ does not meet any obstacle, and
\item
every non-edge of $G$ meets at least one obstacle.
\end{enumerate}

Equivalently,
$G$ is the visibility graph on $V(G)$ determined by the obstacles in $\mathcal{O}$.  
The size of an obstacle representation is the cardinality of $\mathcal{O}$.
Denote by \emph{ORPG} the problem of computing a minimum-size obstacle representation of $G$ (the optimum of which is called the {\em obstacle number} of $G$).
Alpert, Koch, and Laison introduced the notions \emph{obstacle representation} and \emph{obstacle number} for abstract graphs \cite{AKL09}
and noted that in any minim\emph{al} obstacle representation, each obstacle can be identified with the face it lies in.
Hence, we will use the terms \emph{face} and \emph{obstacle} interchangeably.
If the faces have weights
then we can seek a minimum-weight obstacle representation.

Finding a minimum-size obstacle representation of a straight-line graph drawing 
was treated as a computational problem
in the setting in which $D$ and $G$ need not be planar  \cite{SariozCCCG11}.
This problem was reduced to hypergraph transversal (hitting set),
with $O(n^4)$ faces available to pierce $O(n^2)$ non-edges 
($O(n)$ faces and $\Theta(n^2)$ non-edges in the ORPG special case).
A randomized $O(\log OPT)$-approximation algorithm 
based on bounding the Vapnik-Chervonenkis dimension of the corresponding hypergraph family was given in \cite{SariozCCCG11}.
Left open was the question of whether better approximations or perhaps optimal algorithms were feasible.

In this note we give partial answers to that question. We show that computing the obstacle number is NP-hard already in the special case of plane graphs; nonetheless, we show that ORPG admits a polynomial-time approximation scheme (PTAS) and is fixed-parameter tractable (FPT). 
We show hardness by a reduction from planar vertex cover; the positive results are consequences of a solution value-preserving reduction to maximum degree 3 planar vertex cover.

\section{Reduction from planar vertex cover}

\begin{thm}
ORPG is NP-hard.
\label{thm:ORPG_NPcomplete}
\end{thm}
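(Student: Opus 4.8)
The plan is to reduce from \textbf{planar vertex cover}, which is NP-hard, using the Alpert--Koch--Laison observation that in a minimal representation every obstacle coincides with a face. Thus ORPG is equivalent to the purely combinatorial problem of selecting a minimum number of faces of the drawing so that each non-edge meets (passes through the interior of) at least one selected face---a hitting-set/vertex-cover-type problem on the incidence between non-edges and faces. My goal is to build, from a planar graph $H$, a straight-line plane graph $G$ whose face/non-edge incidence encodes the vertices and edges of $H$ so that the obstacle number of $G$ equals $\tau(H)+C$, where $\tau(H)$ is the vertex-cover number of $H$ and $C$ is a fixed, polynomial-time computable constant.

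First I would realize $H$ as a \emph{map}: a subdivision of the plane into interior-disjoint regions, one region $f_v$ per vertex $v$, with $f_u$ and $f_v$ sharing a boundary arc of positive length exactly when $uv\in E(H)$ (such a map exists for every planar $H$). Replacing each boundary arc by a polygonal path and perturbing the resulting points into general position yields the plane graph $G$; the regions $f_v$ become its \emph{vertex faces}, and all remaining faces (the outer face and any faces created along the arcs) are \emph{auxiliary faces}. For each edge $uv\in E(H)$ I would install a designated non-edge $n_{uv}$ that straddles the shared boundary of $f_u$ and $f_v$: its two endpoints lie just inside $f_u$ and $f_v$ and its segment crosses only that shared boundary, so $n_{uv}\subset f_u\cup f_v$ and is pierced by $f_u$ or $f_v$ and by no other face. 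Selecting faces to pierce all the $n_{uv}$ is then exactly choosing a vertex cover of $H$.

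The crux is to neutralize the remaining, ``spurious'' non-edges, of which there are $\Theta(|V(G)|^2)$ since a planar $G$ is far from complete. I would arrange the subdivision so that (i) each vertex face $f_v$ is a triangle---or otherwise admits no chord that is a non-edge---so that $f_v$ is never \emph{forced} on its own, and (ii) every auxiliary face is a convex polygon with at least four corners, so each auxiliary face has a private diagonal non-edge that only it can pierce and is therefore forced into every representation. If, in addition, every non-designated non-edge is made to pass through at least one auxiliary face, then every valid representation must contain all $C$ auxiliary faces (which already pierce every spurious non-edge) together with a set of vertex faces piercing the $n_{uv}$, giving obstacle number exactly $C+\tau(H)$.

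The hard part will be the simultaneous geometric realization of these properties while keeping the edges straight, crossing-free, and the points in general position: I must guarantee that each designated non-edge stays trapped in its two vertex faces, that no spurious non-edge slips through only vertex faces, and that the forced/optional distinction survives the perturbation into general position. I expect this to require a perturbation-stable gadget for each shared arc together with a counting lemma that cleanly decouples the fixed auxiliary cost $C$ from the vertex-cover cost $\tau(H)$; with that decoupling established, the NP-hardness of planar vertex cover transfers directly to ORPG.
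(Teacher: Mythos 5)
Your high-level framing (obstacle representation as a faces-versus-non-edges hitting set, reduction from planar vertex cover, forced non-triangular faces contributing a fixed offset $C$) matches the paper, and your map-of-touching-regions idea is even mentioned there as an alternate strategy. But there is a genuine gap at the heart of the plan: condition (i), that each vertex face $f_v$ be a triangle or otherwise chord-free, is unachievable for vertices of degree at least $4$. A region $f_v$ that shares positive-length polygonal boundary with $d$ other regions has at least $d$ vertices on its boundary, and every simple face with at least four boundary vertices has an internal diagonal; that diagonal is a non-edge whose interior meets only $f_v$, so $f_v$ is forced into every representation exactly like your auxiliary faces. Once every vertex face is forced, the optimum no longer depends on $\tau(H)$ and the reduction collapses. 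The paper's essential extra step, which your proposal is missing, is to first reduce planar vertex cover to \emph{maximum degree 3} planar vertex cover (replacing each vertex by an even cycle with a parity-fixing pendant, realized geometrically as a wheel of empty triangles), so that every ``choice'' unit really is an empty triangle and every coupling between two choices is a \emph{diamond}: two empty triangles sharing an edge with their four vertices in convex position, whose second diagonal is a non-edge met by exactly those two triangles.

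A second, more local problem: you place the endpoints of the designated non-edge $n_{uv}$ ``just inside'' $f_u$ and $f_v$. Endpoints of non-edges are vertices of $G$, so this puts a vertex in the interior of each vertex face; that vertex sees the boundary vertices of its face through that face alone, creating non-edges pierced only by $f_u$ (resp.\ $f_v$) and again forcing both faces. The correct gadget takes $n_{uv}$ to be a diagonal between \emph{boundary} vertices of the two faces, i.e., the diamond's non-edge. Finally, your requirement that every spurious non-edge pass through a forced auxiliary face is both stronger than necessary and hard to arrange (a non-edge joining two vertices on the shared arc of $f_u$ and $f_v$ need not leave those two faces); the paper instead relies on the fact that every non-edge either lies in a non-triangular (hence forced) face or crosses the shared edge of some diamond, so selecting all forced faces plus one triangle from each diamond automatically blocks all spurious non-edges without any extra global condition.
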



\begin{proof}
We reduce from planar vertex cover. 
Recall that in the decision version of planar vertex cover, 
we are given an abstract planar graph $G$ having (without loss of generality) no isolated vertex, and a number $k$.
Let $n = |V(G)|$, $m = |E(G)|$, and denote by $f$ the number of faces in any crossing-free planar drawing of $G$.
We will transform $G$ in polynomial time 
into a plane graph $G'$ in such a way that $G$ has a vertex cover of size $k$ if and only if $G'$ has an obstacle representation of size $k'$ (for $k'$ defined below).

First, we construct from the planar vertex cover instance $G$ a planar vertex cover problem instance $G^3$ with maximum degree 3, 
adapting and extending the construction of \cite{GJ77}. 
The graph $G^3$ admits a vertex cover of size $k'$ if and only if $G$ admits a vertex cover of size $k$. Second, we construct an ORPG instance $G'$ in such a way that an obstacle representation of $G'$ will correspond to a vertex cover of $G^3$ of the same size, and vice versa.

\vskip.2cm\noindent \textbf{Constructing the maximum degree 3 planar vertex cover instance $G^3$.}
The planar graph $G^3$ is constructed as follows. We transform each vertex $v_i$ of $G$ into a cycle $C^{i}$ of length $2b_i$, with $b_i \in \deg(v_i) + \{0,1,2\}$ (with the exact value decided below). 
We color the vertices of $C^{i}$ alternating between blue and red. We then create a single leaf vertex $z_i$ adjacent to some arbitrary red vertex of $C^{i}$. We transform each edge $(v_i, v_j)$ of $G$ into a path $P_{ij}$ with \emph{three} edges whose endpoints are \emph{distinct} blue vertices of $C^{i}$ and $C^{j}$.
We finally create $f$ copies of the 3-vertex path graph $P_3$, each constituting a component of $G^3$.


We claim that $G$ has a vertex cover of size at most $k$ if and only if $G^3$ has one of size at most $k' = k + f + m + \sum_i b_i$. 

($\Rightarrow$): 
For each vertex $v_i$ in a given vertex cover for $G$ of size $k$, we select $z_i$ and all the blue vertices of $C^{i}$, thus including an endpoint of each path $P_{ij}$; and for each $v_i$ not in the cover, we select all the red vertices of $C^{i}$ (a total so far of $k + \sum_i b_i$ vertices). 
Since for every path $P_{ij}$ at least one of the cycles $C^{i}$ and $C^{j}$ will have all its blue vertices chosen, thus including at least one endpoint of $P_{ij}$, choosing one internal vertex from each $P_{ij}$ ($m$ more), and the central vertex of each $P_3$ ($f$ more) suffices to complete a size $k'$ vertex cover for $G^3$.

($\Leftarrow$): 
Given a vertex cover for $G^3$ of size $k'$, we obtain a canonical vertex cover for $G^3$ of size $k '' \leq k'$ in the following way.
Each copy of $P_3$ contributes at least one vertex to a cover, 
so have it contribute exactly its central vertex,
for a total of $f$ vertices.
Each path $P_{ij}$ contributes at least one of its internal vertices to cover its central edge.
If both internal vertices of a path $P_{ij}$ are in the given cover,
take one internal vertex out and ensure that its blue neighbor is in,
which makes for $m$ internal vertices from these paths.
Note that every cycle $C^i$ contributes at least $b_i$ vertices, 
lest some edge of the cycle be uncovered.
This holds with equality only if $C^i$ contributes (including `its' $z_i$) exactly its red vertices.
Otherwise, ensure that $C^i$ contributes exactly $1 + b_i$ vertices: `its' $z_i$ and its blue vertices.
Denote by $k''$ the size of this resulting canonical vertex cover.
The cycles in $G^3$ contributing blue vertices therefore correspond to a vertex cover for $G$ of size 
$k'' - f - m - \sum_i b_i 
\leq
k' - f - m - \sum_i b_i 
= 
k$.

\vskip.2cm\noindent \textbf{Constructing the ORPG instance $\tilde G$.}
In the remainder of the proof, we show how to ``implement'' the graph $G^3$ as an equivalent ORPG problem instance. The basic building blocks of the construction are \emph{empty triangles} and \emph{diamonds}. 
An \emph{empty triangle} is a face of a plane graph that is surrounded by three edges and has no vertex inside.
A \emph{diamond} consists of two empty triangles sharing an edge and having their \emph{four} vertices in convex position.
Observe that a diamond contains a non-edge between two of its vertices. 
Hence at least one empty triangle of every diamond must be chosen in an obstacle representation.
The $f$ copies of
$P_3$ in $G^3$ will match the faces of $\tilde G$ besides empty triangles, all of which must be chosen.
The remaining vertices of $G^3$ will match the empty triangles of $\tilde G$, such that the edges among them match the diamonds of $\tilde G$.
Hence there will be a natural bijection between vertex covers of $G^3$ and obstacle representations of $\tilde G$.

To begin the construction, we use the linear-time algorithm of de Fraysseix, Pach, and Pollack \cite{FPP90} to
obtain a planar imbedding of $G$ on a \mbox{$O(n) \times O(n)$} portion of the integer lattice and then perturb the coordinates to obtain general position. 
(We do not distinguish between $G$ and this imbedding.) 
We first visualize $\tilde G$ as a bold drawing \cite{vanKreveld10} of $G$, whose vertices are represented by small disks and edges by solid rectangles: we draw each vertex $u_i$ of $G$ as a disk $D_i$ about $u_i$ (with boundary $\tilde C^i$), 
and every edge $u_i u_j$ as a solid rectangle $R_{ij}$.  See Fig. \ref{fig:bolddraw}.
Each $R_{ij}$ has two vertices $t_{ij}$,$v_{ij}$ on $\tilde C^i$ 
and two vertices $t_ {ji},v_ {ji}$ on $\tilde C^{j}$
such that the line $u_i u_j$ is a midline of $R_{ij}$,
and $t_{ij} u_i v_{ij} t_{ji} u_j v_{ji}$ is a counterclockwise ordering of the vertices of a convex hexagon.

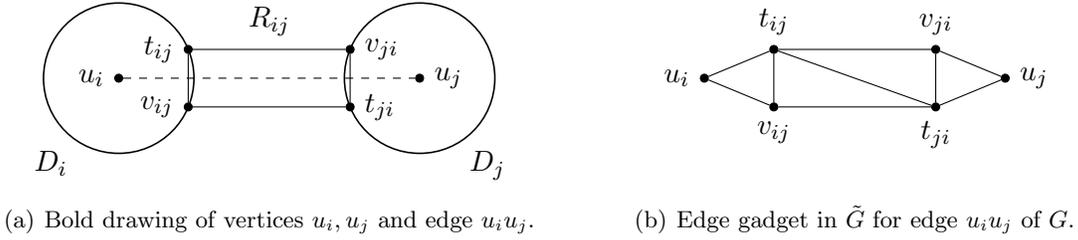
\begin{figure*}[t!]
\begin{center}
\subfigure[Bold drawing of vertices $u_i, u_j$ and edge $u_iu_j$.]{
\begin{tikzpicture}[scale=4]
\newcommand{\radius}{0.25}	        			
\newcommand{\halfAlpha}{22.5} 				
\newcommand{\thecos}{{cos(\halfAlpha)}}
\newcommand{\thesin}{{sin(\halfAlpha)}}
\newcommand{\rightx}{1}
\coordinate (P) at (-\radius * 1.5, -\radius * 1.5);
\coordinate (Q) at (1.5 * \radius + \rightx, \radius * 1.5);
\draw[white] (P) rectangle (Q);		
\node [black node] 	(ui) 	at (0,0) 		[label=left:$u_i$] {};
\node [black node] 	(uj) 	at (\rightx,0) 	[label=right:$u_j$] {};
\draw [semithick] 	(ui) 	circle (\radius);
\draw [semithick] 	(uj) 	circle (\radius);
\node	 [black node]	(tij)	at ({\halfAlpha}:{\radius}) 			[label=left:$t_{ij}$] {};
\node	 [black node]	(vij)	at ({-\halfAlpha}:{\radius}) 			[label=left:$v_{ij}$] {};
\begin{scope}[shift={(uj)}]
\node	 [black node]	(tji)	at ({\halfAlpha}:{-\radius}) 			[label=right:$t_{ji}$] {};
\node	 [black node]	(vji)	at ({-\halfAlpha}:{-\radius})			[label=right:$v_{ji}$] {};
\end{scope}
\draw [dashed] (ui) -- (uj);
\draw (tij) -- (vji) -- (tji) -- (vij) -- (tij);
\node 	(Ci)   at  ({-0.9*{\radius}}, {-0.7*{\radius}})  		[label=below:$D_i$] {}; 
\node 	(Cj)   at  ({\rightx + 0.9*{\radius}}, {-0.7*{\radius}})  	[label=below:$D_j$] {}; 
\node 	(Rij)   at  ({0.5*{\rightx}}, {\radius * {\thesin - 0.02}})  		[label=above:$R_{ij}$] {}; 
\end{tikzpicture}
\label{fig:bolddraw}
}
\quad
\subfigure[Edge gadget in $\tilde G$ for edge $u_iu_j$ of $G$.]{
\begin{tikzpicture}[scale=4]
\newcommand{\radius}{0.25}	        			
\newcommand{\halfAlpha}{22.5} 				
\newcommand{\thecos}{{cos(\halfAlpha)}}
\newcommand{\thesin}{{sin(\halfAlpha)}}
\newcommand{\rightx}{1}
\coordinate (P) at (-\radius * 1.5, -\radius * 1.5);
\coordinate (Q) at (1.5 * \radius + \rightx, \radius * 1.5);
\draw[white] (P) rectangle (Q);			
\node [black node] 	(ui) 	at (0,0) 		[label=left:$u_i$] {};
\node [black node] 	(uj) 	at (\rightx,0) 	[label=right:$u_j$] {};
\node	 [black node]	(tij)	at ({\halfAlpha}:{\radius}) 			[label=above :$t_{ij}$] {};
\node	 [black node]	(vij)	at ({-\halfAlpha}:{\radius}) 			[label=below :$v_{ij}$] {};
\begin{scope}[shift={(uj)}]
\node	 [black node]	(tji)	at ({\halfAlpha}:{-\radius}) 			[label=below :$t_{ji}$] {};
\node	 [black node]	(vji)	at ({-\halfAlpha}:{-\radius})			[label=above :$v_{ji}$] {};
\end{scope}
\node 	(Rij)   at  ({0.5*{\rightx}}, {\radius * {\thesin - 0.02}})  		{}; 
\draw (tij) -- (vij);
\draw (tji) -- (vji);
\draw (tij) -- (vji);
\draw (tji) -- (vij);
\draw (ui) -- (tij);
\draw (ui) -- (vij);
\draw (uj) -- (tji);
\draw (uj) -- (vji);
\draw (tij) -- (tji);
\end{tikzpicture}
\label{fig:GprimeForSingleEdge}
}
\end{center}
\caption{Bold drawing and edge gadget for an edge of $G$.}
\end{figure*}

We draw the disks small enough to ensure that they are well-separated from one another.
We set the radius $r$ of every disk to the smaller of 1/4 and half of the minimum distance between 
a vertex $u_i$ and an edge $u_j u_k$ ($j \neq i \neq k$) of $G$. 
To fix a single width for all rectangles (i.e., $||t_{ij} - v_{ij}||$), 
we set a global angle measure $\alpha$ to the smaller of $45^{\circ}$ and 
half of the smallest angle between two edges of $E(G)$ incident on the same vertex of $V(G)$.

$\tilde G$ is modeled on the bold drawing, by implementing each edge of $G$ (path $P_{ij}$ of $G^3$) with an edge gadget and each vertex of $G$ (cycle $C^{i}$ of $G^3$) with a vertex gadget. The edge gadget, consisting of four triangles forming three diamonds, is shown in Fig. \ref{fig:GprimeForSingleEdge}. (Note that each pair $v_{ij} v_{ji}$ defines a non-edge.) 

The vertex gadget is a modified wheel graph whose triangles correspond to the vertices of cycles $C^i$ in $G^3$ 
(see Fig. \ref{fig:nodegadget}).
On every circle $\tilde C^i$, for every edge $u_i u_j$ in $G$, we color blue the arc of measure $\alpha$
centered about the intersection of circle $\tilde C^i$ with $u_i u_j$ (a non-edge in $\tilde G$).
We place $t_{ij}$ and $v_{ij}$ at the endpoints of this arc 
so that
$t_{ij} u_{i} v_{ij}$ is a counterclockwise triple.
By the choice of $\alpha$, all blue arcs are well-separated, and hence the rectangles are well-separated from one another and from other disks, by the choice of $r$. We color the remaining arcs red to obtain a red-blue striped pattern on each circle $\tilde C^i$, corresponding in color to the vertices of the corresponding $C^i$ in $G^3$.

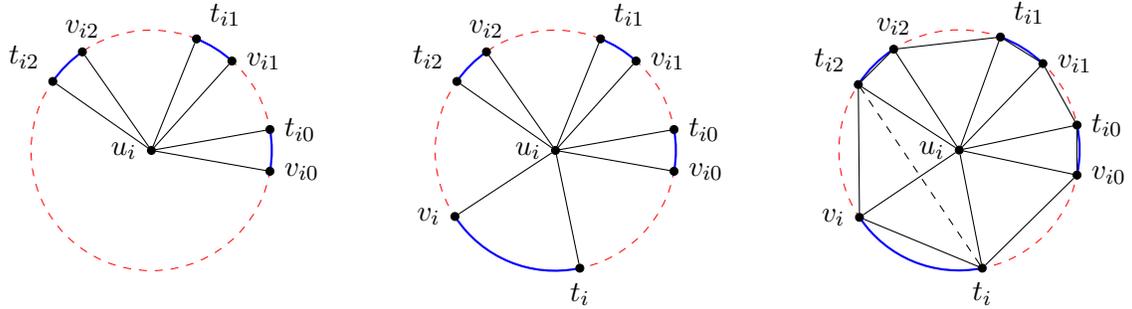
\begin{figure*}[t!]
\begin{center}
\subfigure[Initial circle with blue (solid) arcs of measure $\alpha$ and red (dashed) arcs has a large red arc of measure in $[180^{\circ} - \alpha$, $270^{\circ})$.]{
\begin{tikzpicture}[scale=0.80]
\newcommand{\radius}{2}	        			
\newcommand{\halfAlpha}{10} 				
\newcommand{\thecos}{{cos(\halfAlpha)}}
\newcommand{\thesin}{{sin(\halfAlpha)}}
\newcommand{\leftCenterToLeftEdge}{0.8}
\newcommand{\centersToTopEdge}{\leftCenterToLeftEdge * 0.9}
\newcommand{\rightx}{\radius * 2 + \radius * 3}
\coordinate (P) at (-\radius - \leftCenterToLeftEdge, -\radius - \centersToTopEdge);
\coordinate (Q) at (\radius + \leftCenterToLeftEdge, \radius + \centersToTopEdge);
\draw[white] (P) rectangle (Q);		
\node [black node] 	(ui) 	at (0,0) 		[label=left:$u_i$] {};
\draw [red, very thin, dashed] 	(ui) 	circle (\radius);
\coordinate	 (ti0)	at ({\halfAlpha}:{\radius}) ;
\coordinate (vi0)	at ({-\halfAlpha}:{\radius}) ;
\newcommand{\vtpairOneMidpointElevation}{58}
\newcommand{\tOneElevation}{(\vtpairOneMidpointElevation + \halfAlpha)}
\newcommand{\vOneElevation}{(\vtpairOneMidpointElevation - \halfAlpha)}
\coordinate	(ti1)	at ({\tOneElevation}:{\radius});
\coordinate (vi1)	at ({\vOneElevation}:{\radius});
\newcommand{\vtpairTwoMidpointElevation}{135}
\newcommand{\tTwoElevation}{(\vtpairTwoMidpointElevation + \halfAlpha)}
\newcommand{\vTwoElevation}{(\vtpairTwoMidpointElevation - \halfAlpha)}
\coordinate (ti2)	at ({\tTwoElevation}:{\radius}) ;	
\coordinate (vi2)	at ({\vTwoElevation}:{\radius}) ;	
\draw[thick, blue] (vi0) arc 	(-\halfAlpha:\halfAlpha:\radius);
\node	 [black node]	(ti0)	at (ti0) 					[label=right:		$t_{i0}$] 	{};
\node	 [black node]	(vi0)	at ({-\halfAlpha}:{\radius}) 		[label=right:		$v_{i0}$] 	{};
\draw[thick, blue] (vi1) arc 	(\vtpairOneMidpointElevation-\halfAlpha:\vtpairOneMidpointElevation+\halfAlpha:\radius);
\node	 [black node]	(ti1)	at (ti1) 					[label=above right:	$t_{i1}$] 	{};
\node	 [black node]	(vi1)	at (vi1) 					[label=right:		$v_{i1}$] 	{};
\draw[thick, blue] (vi2) arc 	(\vtpairTwoMidpointElevation-\halfAlpha:\vtpairTwoMidpointElevation+\halfAlpha:\radius);
\node	 [black node]	(ti2)	at (ti2) 					[label=above left:	$t_{i2}$] 	{};
\node	 [black node]	(vi2)	at (vi2) 					[label=above:		$v_{i2}$] 	{};
\draw (ui) -- (ti0);
\draw (ui) -- (vi0);
\draw (ui) -- (ti1);
\draw (ui) -- (vi1);
\draw (ui) -- (ti2);
\draw (ui) -- (vi2);

\newcommand{\measureOfRedArc}{(360 - \halfAlpha - \vtpairTwoMidpointElevation - \halfAlpha)}
\newcommand{\measureOfRedSubArc}{(\measureOfRedArc / 3)}
\end{tikzpicture}
}
\quad
\subfigure[After subdividing the large red arc into \emph{three}, coloring its middle part blue, and adding dummy $v_i$ and $t_i$ vertices.]{
\begin{tikzpicture}[scale=0.80]
\newcommand{\radius}{2}	        			
\newcommand{\halfAlpha}{10} 				
\newcommand{\thecos}{{cos(\halfAlpha)}}
\newcommand{\thesin}{{sin(\halfAlpha)}}
\newcommand{\leftCenterToLeftEdge}{0.8}
\newcommand{\centersToTopEdge}{\leftCenterToLeftEdge * 0.9}
\newcommand{\rightx}{\radius * 2 + \radius * 3}
\coordinate (P) at (-\radius - \leftCenterToLeftEdge, -\radius - \centersToTopEdge);
\coordinate (Q) at (\radius + \leftCenterToLeftEdge, \radius + \centersToTopEdge);
\draw[white] (P) rectangle (Q);		
\node [black node] 	(ui) 	at (0,0) 		[label=left:$u_i$] {};
\draw [red, very thin, dashed] 	(ui) 	circle (\radius);
\coordinate	 (ti0)	at ({\halfAlpha}:{\radius}) ;
\coordinate (vi0)	at ({-\halfAlpha}:{\radius}) ;
\newcommand{\vtpairOneMidpointElevation}{58}
\newcommand{\tOneElevation}{(\vtpairOneMidpointElevation + \halfAlpha)}
\newcommand{\vOneElevation}{(\vtpairOneMidpointElevation - \halfAlpha)}
\coordinate	(ti1)	at ({\tOneElevation}:{\radius});
\coordinate (vi1)	at ({\vOneElevation}:{\radius});

\newcommand{\vtpairTwoMidpointElevation}{135}
\newcommand{\tTwoElevation}{(\vtpairTwoMidpointElevation + \halfAlpha)}
\newcommand{\vTwoElevation}{(\vtpairTwoMidpointElevation - \halfAlpha)}
\coordinate (ti2)	at ({\tTwoElevation}:{\radius}) ;	
\coordinate (vi2)	at ({\vTwoElevation}:{\radius}) ;	

\draw (ui) -- (ti0);
\draw (ui) -- (vi0);
\draw (ui) -- (ti1);
\draw (ui) -- (vi1);
\draw (ui) -- (ti2);
\draw (ui) -- (vi2);

\draw[thick, blue] (vi0) arc 	(-\halfAlpha:\halfAlpha:\radius);
\node	 [black node]	(ti0)	at (ti0) 					[label=right:		$t_{i0}$] 	{};
\node	 [black node]	(vi0)	at ({-\halfAlpha}:{\radius}) 		[label=right:		$v_{i0}$] 	{};

\draw[thick, blue] (vi1) arc 	(\vtpairOneMidpointElevation-\halfAlpha:\vtpairOneMidpointElevation+\halfAlpha:\radius);
\node	 [black node]	(ti1)	at (ti1) 					[label=above right:	$t_{i1}$] 	{};
\node	 [black node]	(vi1)	at (vi1) 					[label=right:		$v_{i1}$] 	{};

\draw[thick, blue] (vi2) arc 	(\vtpairTwoMidpointElevation-\halfAlpha:\vtpairTwoMidpointElevation+\halfAlpha:\radius);
\node	 [black node]	(ti2)	at (ti2) 					[label=above left:	$t_{i2}$] 	{};
\node	 [black node]	(vi2)	at (vi2) 					[label=above:		$v_{i2}$] 	{};

\newcommand{\measureOfRedArc}{(360 - \halfAlpha - \vtpairTwoMidpointElevation - \halfAlpha)}
\newcommand{\measureOfRedSubArc}{(\measureOfRedArc / 3)}

\coordinate (vidummy0) 			at ({\tTwoElevation + \measureOfRedSubArc}:{\radius}) ;
\coordinate (tidummy0)			at ({\tTwoElevation + \measureOfRedSubArc + \measureOfRedSubArc}:{\radius}) ;
\draw[thick, blue] (vidummy0) arc 	(\tTwoElevation + \measureOfRedSubArc : \tTwoElevation + \measureOfRedSubArc + \measureOfRedSubArc : \radius);
\node	 [black node]	(vidummy0) 	at (vidummy0) 		[label=left:		$v_{i}$] {};
\node	 [black node]	(tidummy0)	at (tidummy0) 		[label=below:	$t_{i}$] {};

\draw[thin] (ui) -- (vidummy0);
\draw[thin] (ui) -- (tidummy0);

\end{tikzpicture}
\label{fig:subdivideVeryLargeRedArc}
}
\quad
\subfigure[In the resulting wheel graph, each pair of triangles sharing an edge induce a diamond.  One diamond's non-edge is shown dashed.]{
\begin{tikzpicture}[scale=0.80]
\newcommand{\radius}{2}	        			
\newcommand{\halfAlpha}{12} 				
\newcommand{\thecos}{{cos(\halfAlpha)}}
\newcommand{\thesin}{{sin(\halfAlpha)}}
\newcommand{\leftCenterToLeftEdge}{0.8}
\newcommand{\centersToTopEdge}{\leftCenterToLeftEdge * 0.9}
\newcommand{\rightx}{\radius * 2 + \radius * 3}
\coordinate (P) at (-\radius - \leftCenterToLeftEdge, -\radius - \centersToTopEdge);
\coordinate (Q) at (\radius + \leftCenterToLeftEdge, \radius + \centersToTopEdge);
\draw[white] (P) rectangle (Q);		
\node [black node] 	(ui) 	at (0,0) 		[label=left:$u_i$] {};
\draw [red, very thin, dashed] 	(ui) 	circle (\radius);
\coordinate	 (ti0)	at ({\halfAlpha}:{\radius}) ;
\coordinate (vi0)	at ({-\halfAlpha}:{\radius}) ;
\newcommand{\vtpairOneMidpointElevation}{58}
\newcommand{\tOneElevation}{(\vtpairOneMidpointElevation + \halfAlpha)}
\newcommand{\vOneElevation}{(\vtpairOneMidpointElevation - \halfAlpha)}
\coordinate	(ti1)	at ({\tOneElevation}:{\radius});
\coordinate (vi1)	at ({\vOneElevation}:{\radius});

\newcommand{\vtpairTwoMidpointElevation}{135}
\newcommand{\tTwoElevation}{(\vtpairTwoMidpointElevation + \halfAlpha)}
\newcommand{\vTwoElevation}{(\vtpairTwoMidpointElevation - \halfAlpha)}
\coordinate (ti2)	at ({\tTwoElevation}:{\radius}) ;	
\coordinate (vi2)	at ({\vTwoElevation}:{\radius}) ;	

\draw (ui) -- (ti0);
\draw (ui) -- (vi0);
\draw (ui) -- (ti1);
\draw (ui) -- (vi1);
\draw (ui) -- (ti2);
\draw (ui) -- (vi2);

\draw[thick, blue] (vi0) arc 	(-\halfAlpha:\halfAlpha:\radius);
\node	 [black node]	(ti0)	at (ti0) 					[label=right:		$t_{i0}$] 	{};
\node	 [black node]	(vi0)	at ({-\halfAlpha}:{\radius}) 		[label=right:		$v_{i0}$] 	{};

\draw[thick, blue] (vi1) arc 	(\vtpairOneMidpointElevation-\halfAlpha:\vtpairOneMidpointElevation+\halfAlpha:\radius);
\node	 [black node]	(ti1)	at (ti1) 					[label=above right:	$t_{i1}$] 	{};
\node	 [black node]	(vi1)	at (vi1) 					[label=right:		$v_{i1}$] 	{};

\draw[thick, blue] (vi2) arc 	(\vtpairTwoMidpointElevation-\halfAlpha:\vtpairTwoMidpointElevation+\halfAlpha:\radius);
\node	 [black node]	(ti2)	at (ti2) 					[label=above left:	$t_{i2}$] 	{};
\node	 [black node]	(vi2)	at (vi2) 					[label=above:		$v_{i2}$] 	{};

\newcommand{\measureOfRedArc}{(360 - \halfAlpha - \vtpairTwoMidpointElevation - \halfAlpha)}
\newcommand{\measureOfRedSubArc}{(\measureOfRedArc / 3)}

\coordinate (vidummy0) 			at ({\tTwoElevation + \measureOfRedSubArc}:{\radius}) ;
\coordinate (tidummy0)			at ({\tTwoElevation + \measureOfRedSubArc + \measureOfRedSubArc}:{\radius}) ;
\draw[thick, blue] (vidummy0) arc 	(\tTwoElevation + \measureOfRedSubArc : \tTwoElevation + \measureOfRedSubArc + \measureOfRedSubArc : \radius);
\node	 [black node]	(vidummy0) 	at (vidummy0) 		[label=left:		$v_{i}$] {};
\node	 [black node]	(tidummy0)	at (tidummy0) 		[label=below:	$t_{i}$] {};

\draw[thin] (ui) -- (vidummy0);
\draw[thin] (ui) -- (tidummy0);

\draw[thin] (vi0) -- (ti0) -- (vi1) -- (ti1) -- (vi2) -- (ti2) -- (vidummy0) -- (tidummy0) -- (vi0);

\draw[dashed] (tidummy0) -- (ti2);	 

\end{tikzpicture}
\label{fig:subdivideVeryLargeRedArc}
}
\end{center}\label{fig:nodegadget}
\caption{Constructing the wheel graph drawing in the case of a large red arc.}
\end{figure*}

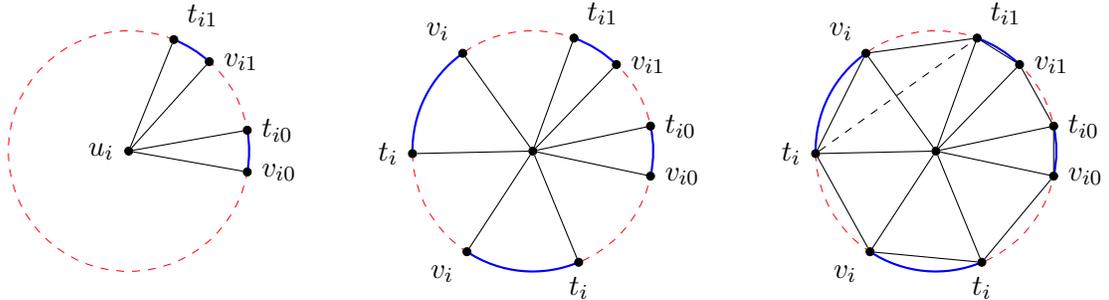
\begin{figure*}[t!]
\begin{center}
\subfigure[An initial circle with blue (solid) and red (dashed) arcs has a {very} large red arc, of measure at least $270^{\circ}$.]{
\begin{tikzpicture}[scale=0.80]
\newcommand{\radius}{2}	        			
\newcommand{\halfAlpha}{10} 				
\newcommand{\thecos}{{cos(\halfAlpha)}}
\newcommand{\thesin}{{sin(\halfAlpha)}}
\newcommand{\leftCenterToLeftEdge}{0.8}
\newcommand{\centersToTopEdge}{\leftCenterToLeftEdge * 0.9}
\newcommand{\rightx}{\radius * 2 + \radius * 3}
\coordinate (P) at (-\radius - \leftCenterToLeftEdge, -\radius - \centersToTopEdge);
\coordinate (Q) at (\radius + \leftCenterToLeftEdge, \radius + \centersToTopEdge);
\draw[white] (P) rectangle (Q);		
\node [black node] 	(ui) 	at (0,0) 		[label=left:$u_i$] {};
\draw [red, very thin, dashed] 	(ui) 	circle (\radius);
\coordinate	 (ti0)	at ({\halfAlpha}:{\radius}) ;
\coordinate (vi0)	at ({-\halfAlpha}:{\radius}) ;
\newcommand{\vtpairOneMidpointElevation}{58}
\newcommand{\tOneElevation}{(\vtpairOneMidpointElevation + \halfAlpha)}
\newcommand{\vOneElevation}{(\vtpairOneMidpointElevation - \halfAlpha)}
\coordinate	(ti1)	at ({\tOneElevation}:{\radius});
\coordinate (vi1)	at ({\vOneElevation}:{\radius});

\newcommand{\vtpairTwoMidpointElevation}{135}
\newcommand{\tTwoElevation}{(\vtpairTwoMidpointElevation + \halfAlpha)}
\newcommand{\vTwoElevation}{(\vtpairTwoMidpointElevation - \halfAlpha)}
\coordinate (ti2)	at ({\tTwoElevation}:{\radius}) ;	
\coordinate (vi2)	at ({\vTwoElevation}:{\radius}) ;	

\draw[thick, blue] (vi0) arc 	(-\halfAlpha:\halfAlpha:\radius);
\node	 [black node]	(ti0)	at (ti0) 					[label=right:		$t_{i0}$] 	{};
\node	 [black node]	(vi0)	at ({-\halfAlpha}:{\radius}) 		[label=right:		$v_{i0}$] 	{};

\draw[thick, blue] (vi1) arc 	(\vtpairOneMidpointElevation-\halfAlpha:\vtpairOneMidpointElevation+\halfAlpha:\radius);
\node	 [black node]	(ti1)	at (ti1) 					[label=above right:	$t_{i1}$] 	{};
\node	 [black node]	(vi1)	at (vi1) 					[label=right:		$v_{i1}$] 	{};

\draw (ui) -- (ti0);
\draw (ui) -- (vi0);
\draw (ui) -- (ti1);
\draw (ui) -- (vi1);

\newcommand{\measureOfRedArc}{(360 - \halfAlpha - \vtpairTwoMidpointElevation - \halfAlpha)}
\newcommand{\measureOfRedSubArc}{(\measureOfRedArc / 3)}


\end{tikzpicture}
}
\quad
\subfigure[After subdividing the very large red arc into \emph{five}, coloring its second and fourth parts blue, and adding dummy $v_i$ and $t_i$ vertices.]{
\begin{tikzpicture}[scale=0.80]
\newcommand{\radius}{2}	        			
\newcommand{\halfAlpha}{12} 				
\newcommand{\thecos}{{cos(\halfAlpha)}}
\newcommand{\thesin}{{sin(\halfAlpha)}}
\newcommand{\leftCenterToLeftEdge}{0.8}
\newcommand{\centersToTopEdge}{\leftCenterToLeftEdge * 0.9}
\newcommand{\rightx}{\radius * 2 + \radius * 3}
\coordinate (P) at (-\radius - \leftCenterToLeftEdge, -\radius - \centersToTopEdge);
\coordinate (Q) at (\radius + \leftCenterToLeftEdge, \radius + \centersToTopEdge);
\draw[white] (P) rectangle (Q);		
\node [black node] 	(ui) 	at (0,0)	 {};	
\draw [red, very thin, dashed] 	(ui) 	circle (\radius);
\coordinate	 (ti0)	at ({\halfAlpha}:{\radius}) ;
\coordinate (vi0)	at ({-\halfAlpha}:{\radius}) ;
\newcommand{\vtpairOneMidpointElevation}{58}
\newcommand{\tOneElevation}{(\vtpairOneMidpointElevation + \halfAlpha)}
\newcommand{\vOneElevation}{(\vtpairOneMidpointElevation - \halfAlpha)}
\coordinate	(ti1)	at ({\tOneElevation}:{\radius});
\coordinate (vi1)	at ({\vOneElevation}:{\radius});

\draw (ui) -- (ti0);
\draw (ui) -- (vi0);
\draw (ui) -- (ti1);
\draw (ui) -- (vi1);

\draw[thick, blue] (vi0) arc 	(-\halfAlpha:\halfAlpha:\radius);
\node	 [black node]	(ti0)	at (ti0) 					[label=right:		$t_{i0}$] 	{};
\node	 [black node]	(vi0)	at ({-\halfAlpha}:{\radius}) 		[label=right:		$v_{i0}$] 	{};

\draw[thick, blue] (vi1) arc 	(\vtpairOneMidpointElevation-\halfAlpha:\vtpairOneMidpointElevation+\halfAlpha:\radius);
\node	 [black node]	(ti1)	at (ti1) 					[label=above right:	$t_{i1}$] 	{};
\node	 [black node]	(vi1)	at (vi1) 					[label=right:		$v_{i1}$] 	{};


\newcommand{\measureOfRedArc} {(360 - \halfAlpha - \vtpairOneMidpointElevation - \halfAlpha)}
\newcommand{\measureOfRedSubArc} {(\measureOfRedArc / 5)}

\coordinate (vidummy0) 			at ({\tOneElevation + \measureOfRedSubArc}:{\radius}) ;
\coordinate (tidummy0)			at ({\tOneElevation + \measureOfRedSubArc + \measureOfRedSubArc}:{\radius}) ;
\draw[thick, blue] (vidummy0) arc 	({\tOneElevation + \measureOfRedSubArc} : {\tOneElevation + \measureOfRedSubArc + \measureOfRedSubArc} : \radius);
\node	 [black node]	(vidummy0) 	at (vidummy0) 		[label=above left:	$v_{i}$] {};
\node	 [black node]	(tidummy0)	at (tidummy0) 		[label=left:			$t_{i}$] {};

\coordinate (vidummy1) 			at ({\tOneElevation + 3*\measureOfRedSubArc}:{\radius}) ;
\coordinate (tidummy1)			at ({\tOneElevation + 3*\measureOfRedSubArc + \measureOfRedSubArc}:{\radius}) ;
\draw[thick, blue] 	(vidummy1) arc 	(\tOneElevation + 3*\measureOfRedSubArc : \tOneElevation + 3*\measureOfRedSubArc + \measureOfRedSubArc : \radius);
\node	 [black node]	(vidummy1) 	at (vidummy1) 		[label=below left:	$v_{i}$] {};
\node	 [black node]	(tidummy1)	at (tidummy1) 		[label=below:	$t_{i}$] {};

\draw[thin] (ui) -- (vidummy0);
\draw[thin] (ui) -- (tidummy0);
\draw[thin] (ui) -- (vidummy1);
\draw[thin] (ui) -- (tidummy1);

\end{tikzpicture}
}
\quad
\subfigure[In the resulting wheel graph, each pair of triangles sharing an edge induce a diamond.  One diamond's non-edge is shown dashed.]{
\begin{tikzpicture}[scale=0.80]
\newcommand{\radius}{2}	        			
\newcommand{\halfAlpha}{12} 				
\newcommand{\thecos}{{cos(\halfAlpha)}}
\newcommand{\thesin}{{sin(\halfAlpha)}}
\newcommand{\leftCenterToLeftEdge}{0.8}
\newcommand{\centersToTopEdge}{\leftCenterToLeftEdge * 0.9}
\newcommand{\rightx}{\radius * 2 + \radius * 3}
\coordinate (P) at (-\radius - \leftCenterToLeftEdge, -\radius - \centersToTopEdge);
\coordinate (Q) at (\radius + \leftCenterToLeftEdge, \radius + \centersToTopEdge);
\draw[white] (P) rectangle (Q);		
\node [black node] 	(ui) 	at (0,0)	 {};	
\draw [red, very thin, dashed] 	(ui) 	circle (\radius);
\coordinate	 (ti0)	at ({\halfAlpha}:{\radius}) ;
\coordinate (vi0)	at ({-\halfAlpha}:{\radius}) ;
\newcommand{\vtpairOneMidpointElevation}{58}
\newcommand{\tOneElevation}{(\vtpairOneMidpointElevation + \halfAlpha)}
\newcommand{\vOneElevation}{(\vtpairOneMidpointElevation - \halfAlpha)}
\coordinate	(ti1)	at ({\tOneElevation}:{\radius});
\coordinate (vi1)	at ({\vOneElevation}:{\radius});

\draw (ui) -- (ti0);
\draw (ui) -- (vi0);
\draw (ui) -- (ti1);
\draw (ui) -- (vi1);

\draw[thick, blue] (vi0) arc 	(-\halfAlpha:\halfAlpha:\radius);
\node	 [black node]	(ti0)	at (ti0) 					[label=right:		$t_{i0}$] 	{};
\node	 [black node]	(vi0)	at ({-\halfAlpha}:{\radius}) 		[label=right:		$v_{i0}$] 	{};

\draw[thick, blue] (vi1) arc 	(\vtpairOneMidpointElevation-\halfAlpha:\vtpairOneMidpointElevation+\halfAlpha:\radius);
\node	 [black node]	(ti1)	at (ti1) 					[label=above right:	$t_{i1}$] 	{};
\node	 [black node]	(vi1)	at (vi1) 					[label=right:		$v_{i1}$] 	{};


\newcommand{\measureOfRedArc} {(360 - \halfAlpha - \vtpairOneMidpointElevation - \halfAlpha)}
\newcommand{\measureOfRedSubArc} {(\measureOfRedArc / 5)}

\coordinate (vidummy0) 			at ({\tOneElevation + \measureOfRedSubArc}:{\radius}) ;
\coordinate (tidummy0)			at ({\tOneElevation + \measureOfRedSubArc + \measureOfRedSubArc}:{\radius}) ;
\draw[thick, blue] (vidummy0) arc 	({\tOneElevation + \measureOfRedSubArc} : {\tOneElevation + \measureOfRedSubArc + \measureOfRedSubArc} : \radius);
\node	 [black node]	(vidummy0) 	at (vidummy0) 		[label=above left:	$v_{i}$] {};
\node	 [black node]	(tidummy0)	at (tidummy0) 		[label=left:			$t_{i}$] {};

\coordinate (vidummy1) 			at ({\tOneElevation + 3*\measureOfRedSubArc}:{\radius}) ;
\coordinate (tidummy1)			at ({\tOneElevation + 3*\measureOfRedSubArc + \measureOfRedSubArc}:{\radius}) ;
\draw[thick, blue] 	(vidummy1) arc 	(\tOneElevation + 3*\measureOfRedSubArc : \tOneElevation + 3*\measureOfRedSubArc + \measureOfRedSubArc : \radius);
\node	 [black node]	(vidummy1) 	at (vidummy1) 		[label=below left:	$v_{i}$] {};
\node	 [black node]	(tidummy1)	at (tidummy1) 		[label=below:	$t_{i}$] {};

\draw[thin] (ui) -- (vidummy0);
\draw[thin] (ui) -- (tidummy0);
\draw[thin] (ui) -- (vidummy1);
\draw[thin] (ui) -- (tidummy1);

\draw[thin] (vi0) -- (ti0) -- (vi1) -- (ti1) -- (vidummy0) -- (tidummy0) -- (vidummy1) -- (tidummy1) -- (vi0);

\draw[dashed] (tidummy0) -- (ti1);	 

\end{tikzpicture}
}
\end{center}
\caption{Constructing the wheel graph drawing in the case of a \emph{very} large red arc.}
\label{fig:subdivideLargeRedArc}
\end{figure*}

On every circle $\tilde C^i$,
we will
add the remaining edges between consecutive vertices of $\tilde C^i$
to complete the union of the triangles $t_{ij} u_i v_{ij}$, forming  a wheel graph on hub $u_{i}$,
such that every pair of triangles sharing a spoke form a diamond.
If a red arc has measure at least $180^{\circ} - \alpha$, however, we must add additional spokes.
By the general position assumption, at most one red arc per wheel can have such great measure.
If such a red arc has measure less than $270 ^{\circ}$, we divide it evenly into \emph{three} parts and color the middle part blue
(see Fig. \ref{fig:nodegadget});
otherwise, we divide it evenly into \emph{five} parts and color the second and fourth parts blue 
(see Fig. \ref{fig:subdivideLargeRedArc}), 
maintaining the striped pattern in both cases.
We place dummy $t_{i}$ and $v_{i}$ vertices\footnote{Dummy vertices have no adjacencies with any vertices outside of $D_i$.}
at the newly created 
(\emph{zero}, \emph{two} or \emph{four}) 
arc endpoints.
Finally, we add the requisite edges to complete the wheel graph.

We place a vertex $\tilde z_i$ on an arbitrary red arc of $\tilde C^i$ and connect it in $\tilde G$ to the end vertices (say $t_{ij}$ and $v_{ik}$) of that arc. 
Thus an empty triangle $t_{ij} z_i v_{ik}$ is formed in $\tilde G$ 
as part of a diamond with $u_i$, corresponding to $z_i$ and its incident edge in $G^3$.

In the unbounded face of $\tilde G$ we place two isolated vertices inducing a non-edge inside the unbounded face, thus requiring this face to be chosen in any solution.
Every non-triangular face of $\tilde G$ must be selected as an obstacle, 
since every simple polygon with at least 4 vertices has an internal diagonal (i.e., a non-edge).
The selection of these faces are forced moves and correspond to the selection, in a vertex cover for $G^3$, of the central vertex of each $P_3$.

This completes the construction of $\tilde G$. Since each pair of neighboring triangles in $\tilde G$ indeed form a diamond and every non-triangular face is indeed a forced move, the result follows.
\end{proof}

\begin{rem}
To represent coordinates \emph{exactly} as described would require a very permissive 
unit-cost RAM model of computation in which it is possible to represent real numbers and perform arithmetic and trigonometric functions in unit time. The reduction above can be modified in such a way that each vertex position of $\tilde G$ is represented using $O(\log n)$ bits.
\end{rem}

In an alternate proof strategy, we can begin with a special {\em touching polygons} representation of $G$ instead of a bold drawing. This strategy would involve topologically ``collapsing'' each rectangle $R_{ij}$ to a single edge shared by the wheel graphs on hubs $u_i$ and $u_j$, which is the unique edge crossing segment $u_i u_j$. 
There is a linear-time algorithm for computing a touching polygons representation where the number of sides in a polygon is at most \emph{six} \cite{Kobourov11}, this algorithm can be modified to ensure that every side of a polygon is an edge of $\tilde G$.
The linear-time algorithm for computing a touching polygons representation given in \cite{Kobourov11} can easily be modified to ensure that every edge of $G$ corresponds to a distinct side between two polygons,
but using this method there is no guaranteed way to place hubs inside their corresponding polygons such that every pair of adjacent polygons have hubs that define non-edges that meet the shared polygon side.
Nonetheless, a polynomial-time algorithm by Mohar \cite{Mohar97} does ensure this.

\section{Reduction \emph{to} vertex cover}

\begin{thm}
Weighted ORPG is reducible to weighted maximum degree 3 planar vertex cover by an optimal solution value-preserving reduction.
\label{reductionToP-VC-3}
\end{thm}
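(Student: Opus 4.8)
The plan is to build, from a weighted ORPG instance $G$ (a plane graph with weighted faces), a weighted planar graph $H$ of maximum degree $3$ together with an additive constant $W_0$, so that $\mathrm{OPT}_{\mathrm{ORPG}}(G) = W_0 + \mathrm{OPT}_{\mathrm{VC}}(H)$. I would take the vertices of $H$ to be the empty triangles of $G$, put an edge between two empty triangles exactly when they share an edge of $G$ and form a diamond, and give each vertex of $H$ the weight of its triangle. The constant $W_0$ would be the total weight of the \emph{forced} faces, i.e.\ the non-triangular bounded faces; the unbounded face can be discarded outright, since every non-edge joins two vertices of $G$ and hence stays inside their convex hull, never meeting the unbounded face.

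Two structural observations set up the correspondence. First, every non-triangular bounded face $F$ is forced: being bounded by at least four edges, $F$ has an interior diagonal between two of its boundary vertices, and that diagonal is a genuine non-edge (were its endpoints adjacent in $G$, the straight edge would coincide with the diagonal and lie inside the face $F$, a contradiction). This non-edge lies in the interior of $F$ and therefore meets no other face, so $F$ must be selected in every representation. Second, $H$ is planar (it is a subgraph of the dual of $G$ restricted to triangular faces) and has maximum degree at most $3$, since each empty triangle has only three sides and each side is shared with at most one other face.

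The crux---and the step I expect to be the main obstacle---is completeness: showing that selecting all forced faces together with \emph{any} vertex cover $S$ of $H$ blocks \emph{every} non-edge, including long ones that cross many edges. Here I would argue as follows. A non-edge $pq$ either lies inside a single face, which is then non-triangular (an empty triangle has all three vertex pairs adjacent) and hence forced and selected; or, by general position, $pq$ crosses edges transversally and passes through a sequence of faces $F_0, \dots, F_k$ with $k \ge 1$. If some $F_i$ is non-triangular it is forced and $pq$ is blocked, so assume every $F_i$ is an empty triangle. Consecutive triangles $F_{i-1}, F_i$ share the edge $xy$ that $pq$ crosses; their apexes $c, d$ lie on opposite sides of line $xy$, so $cd$ crosses $xy$ and is a non-edge (an edge would cross $xy$), whence $x, c, y, d$ are in convex position and $F_{i-1}, F_i$ form a diamond---an edge of $H$. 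Since $S$ covers this edge, one of $F_{i-1}, F_i$ is selected, and $pq$ is blocked.

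Soundness and the value equality then follow quickly. Given an optimal representation $\mathcal O^*$, it contains every forced face (each is forced), and for each diamond its non-edge meets only the diamond's two triangles, so $\mathcal O^*$ contains at least one of them; thus $\mathcal O^* \cap \{\text{empty triangles}\}$ is a vertex cover of $H$ and $w(\mathcal O^*) \ge W_0 + \mathrm{OPT}_{\mathrm{VC}}(H)$. Conversely the forced faces plus an optimal cover form a valid representation of weight $W_0 + \mathrm{OPT}_{\mathrm{VC}}(H)$, giving equality. Since the weights are nonnegative and $W_0$ is a computable constant common to both optima, the correspondence is affine and preserves optimal solution values; it therefore transfers both the PTAS and the FPT status from maximum-degree-$3$ planar vertex cover to weighted ORPG.
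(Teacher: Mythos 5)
Your reduction is structurally the same as the paper's (vertices of the vertex-cover instance are the empty triangles, edges are the diamonds, non-triangular bounded faces are forced), but the step you yourself flag as the crux contains a genuine error. You claim that \emph{every} pair of consecutive empty triangles $F_{i-1},F_i$ traversed by a non-edge $pq$ forms a diamond, deducing from ``the apexes $c,d$ lie on opposite sides of line $xy$'' that ``$cd$ crosses $xy$.'' Lying on opposite sides of the \emph{line} through $x$ and $y$ only guarantees that $cd$ crosses that line, not the shared \emph{segment} $xy$; when the crossing point falls outside the segment, the quadrilateral $xcyd$ is reflex at $x$ or at $y$ and the four vertices are not in convex position, so the pair is not a diamond and contributes no edge of $H$. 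Concretely, take $x=(0,0)$, $y=(1,0)$, $c=(0.5,1)$, $d=(5,-0.1)$: both triangles can be empty faces sharing $xy$ and a non-edge can cross $xy$ from one into the other, yet $cd$ meets the line through $x,y$ near $(4.6,0)$, far outside the segment, and indeed $y$ lies inside the triangle $xcd$. What is actually true---and what the paper proves---is the weaker statement that \emph{some} consecutive pair along $pq$ forms a diamond: the paper argues that whenever $f_{i-1},f_i$ fail to be a diamond, the new apex $v_i$ is forced to the opposite side of $s$ from $v_{i-1}$, and this alternation cannot persist through a finite face sequence. Your completeness argument needs this lemma (or an equivalent) in place of the false ``every consecutive pair'' claim; without it, a non-edge whose consecutive triangle pairs are all reflex could go unblocked by your construction as argued.

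A second, smaller error: you discard the unbounded face on the grounds that every non-edge stays inside the convex hull of $V(G)$. But if two consecutive vertices of the convex hull are non-adjacent, the open segment between them lies on the hull boundary and meets \emph{only} the unbounded face, which must therefore be selected. The paper handles this by forcing the unbounded face exactly when the hull boundary contains a non-edge; your additive constant $W_0$ must include its weight in that case, or the representation you produce fails to block that non-edge.
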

\begin{proof}
Given a plane graph $G$ on $n$ vertices in general position, we construct a graph $\hat{G}$ that admits a vertex cover of cost $k$ if and only if $G$ admits an obstacle representation of cost $k$.

Every bounded non-triangular face of $G$ must be selected as an obstacle; moreover, the unbounded face must be chosen if and only if its convex hull boundary contains a non-edge. Since these are forced moves, we henceforth assume without loss of generality that every non-edge we must block meets at least two faces. 

Recall that an \emph{empty triangle} is a bounded face on three vertices not containing any other vertices, and that a \emph{diamond} consists of two empty triangles that share an edge and have their four vertices in convex position.

We claim that every non-edge must meet the two triangles forming some diamond, and hence must meet those triangles' shared edge.
Assume for contradiction that some remaining non-edge $s$ never crosses the diagonal edge of a diamond.
Denote by $u$ and $v$ the endpoints of $s$, and orient the plane such that $u$ is directly below $v$.
Obtain a sequence of empty triangles $(f_0, f_1, \ldots, f_k)$ by tracing $s$ 
from $u$ (a vertex on $f_0$) 
to $v$ (a vertex on $f_k$).
Denote by $v_i$ (for $1 \leq i \leq k$) the unique vertex in face $f_k$ that is not a vertex of $f_{i-1}$ (so that $v_k = v$).
Without loss of generality, the reflex angle of $f_0$ and $f_1$ is to the right of $s$, which implies that $v_1$ is to the right of $s$.
In order for $f_2$ to be the next face in this sequence, $v_2$ must be to the left of $s$.
In general, in order for $f_i$ to be the next face in this sequence, 
$v_i$ must be on the other side of $s$ from $v_{i-1}$.
This pattern must continue indefinitely, lest two consecutive triangles form a diamond. The indefinite continuation of this pattern implies an infinite sequence of faces defined by $s$, and hence a contradiction.

We now define $\hat{G}$, which is a subgraph of the dual of $G$: each edge of $\hat G$ corresponds to diamond of $G$. 
The graph $\hat G$ is induced by these edges (with vertex weights set to the correspond face weights). 
For each diamond, at least one its two triangles must be chosen in any obstacle representation. 
Thus every obstacle representation of $\hat G$ corresponds to a vertex cover of $G$ of the same cost, and vice versa.
\end{proof}

\begin{rem}
We may wish to adopt the more realistic bit model, since a plane graph drawing may have been expressed using a number of bits super-polynomial in $n$ for vertex coordinates.
In this model, the reduction would require time super-polynomial in $|V(G)|$ but nonetheless polynomial in the number of input bits used for representing $G$.
\label{rem:WhatPolynomial}
\end{rem}

From Theorem \ref{reductionToP-VC-3} we immediately obtain the following.

\begin{cor}
Weighted ORPG admits a polynomial-time approximation scheme (PTAS) \cite{Baker94PTAS}.
\end{cor}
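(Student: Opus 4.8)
The plan is to obtain the scheme by composing the reduction of Theorem~\ref{reductionToP-VC-3} with Baker's approximation scheme for planar graphs~\cite{Baker94PTAS}. By Theorem~\ref{reductionToP-VC-3}, a weighted ORPG instance $G$ is transformed in polynomial time into a weighted planar graph $\hat G$ of maximum degree~$3$ such that obstacle representations of $G$ and vertex covers of $\hat G$ are in weight-preserving correspondence: an obstacle representation of cost $k$ yields a vertex cover of cost $k$ and vice versa. In particular the two optima coincide, and, more importantly for an approximation scheme, any \emph{feasible} vertex cover of $\hat G$ pulls back to a feasible obstacle representation of exactly the same cost.

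First I would invoke the fact that weighted minimum vertex cover admits a PTAS on planar graphs. For a parameter $\ell = \Theta(1/\epsilon)$, Baker's shifting technique partitions $\hat G$ into $\ell+1$ overlapping families of $k$-outerplanar subgraphs; on each such subgraph weighted vertex cover is solved \emph{exactly} by dynamic programming over a decomposition of bounded width, and the shifting argument guarantees that the best of the $\ell+1$ offsets loses at most a factor of $(1+\epsilon)$. This runs in time polynomial in $|V(\hat G)|$, with the exponent depending only on $1/\epsilon$, and the dynamic program accommodates arbitrary nonnegative vertex weights without modification. Hence, for every fixed $\epsilon>0$, we compute in polynomial time a vertex cover of $\hat G$ of weight at most $(1+\epsilon)\,\mathrm{OPT}(\hat G)$.

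Next I would transport this guarantee back across the reduction. Using the weight-preserving bijection of Theorem~\ref{reductionToP-VC-3}, the computed $(1+\epsilon)$-approximate vertex cover of $\hat G$ corresponds to an obstacle representation of $G$ of the same cost, and hence of weight at most $(1+\epsilon)\,\mathrm{OPT}(G)$. Composing the polynomial-time construction of $\hat G$ with the polynomial-time approximation scheme on $\hat G$ therefore yields, for every fixed $\epsilon>0$, a polynomial-time $(1+\epsilon)$-approximation for weighted ORPG, which is precisely a PTAS.

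The crux of the argument is that Theorem~\ref{reductionToP-VC-3} is value-preserving in \emph{both} directions rather than merely optimum-preserving, so that the multiplicative ratio carries over intact; this is exactly what the bijection between obstacle representations and vertex covers supplies. The remaining points to check are minor: that Baker's scheme applies to the \emph{weighted} problem (its dynamic program handles weights) and that the forced faces (bounded non-triangular faces and, when required, the unbounded face) are treated correctly. The latter contribute a fixed additive cost $F \ge 0$ that is common to every solution, so approximating the remaining cost within $(1+\epsilon)$ approximates the total within $(1+\epsilon)$ as well, since $F + (1+\epsilon)\,\mathrm{OPT}' \le (1+\epsilon)(F + \mathrm{OPT}')$. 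I expect no real obstacle here, as the work has already been done in Theorem~\ref{reductionToP-VC-3}; the corollary is essentially a black-box application of the planar PTAS machinery.
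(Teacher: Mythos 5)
Your proposal is correct and matches the paper's intent: the paper derives this corollary directly from Theorem~\ref{reductionToP-VC-3} by black-box application of Baker's planar PTAS, exactly as you do, and your added care about the bidirectional value preservation and the fixed additive cost of forced faces only makes explicit what the paper leaves implicit.
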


Moreover, this also follows.
\begin{cor}
ORPG is fixed parameter tractable (FPT).
\end{cor}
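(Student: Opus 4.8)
The plan is to observe that this corollary is an immediate consequence of Theorem~\ref{reductionToP-VC-3} together with the fact that vertex cover is the prototypical fixed-parameter tractable problem. The natural parameter for ORPG is the obstacle number itself, i.e., the number $k$ of obstacles in the sought representation, and I would show that deciding whether $G$ admits an obstacle representation of size at most $k$ can be done in time $f(k)\cdot\mathrm{poly}(|G|)$ for some computable $f$.

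First I would separate out the forced moves exactly as in the proof of Theorem~\ref{reductionToP-VC-3}: let $c$ denote the number of faces of $G$ that must be selected in \emph{every} obstacle representation (every bounded non-triangular face, together with the unbounded face when its convex-hull boundary carries a non-edge). This count $c$ is computable in polynomial time. By the construction underlying that theorem, blocking the remaining non-edges is equivalent to hitting every diamond, so the obstacle number of $G$ equals $c+\tau(\hat G)$, where $\hat G$ is the maximum-degree-$3$ planar graph produced by the reduction and $\tau(\hat G)$ is its minimum vertex-cover size. Consequently, $G$ has an obstacle representation of size at most $k$ if and only if $k\ge c$ and $\hat G$ has a vertex cover of size at most $k-c$.

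Next I would invoke the standard fixed-parameter algorithm for vertex cover parameterized by the cover size $k'=k-c\le k$: the classical bounded-search-tree argument decides whether $\hat G$ has a vertex cover of size at most $k'$ in time $O(2^{k'}\,|\hat G|)$ (and kernelization to a linear, $2k'$-vertex kernel yields sharper bounds). Since the reduction of Theorem~\ref{reductionToP-VC-3} runs in polynomial time and maps the parameter $k$ to a parameter $k'\le k$, it is a parameterized (FPT) reduction; composing it with the vertex-cover routine decides ORPG in time $f(k)\cdot\mathrm{poly}(|G|)$, establishing fixed-parameter tractability. (Planarity and bounded degree are not needed here, though they would in fact yield subexponential $2^{O(\sqrt{k})}$ parameterized running times.)

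There is essentially no hard step: the substantive work has already been carried out in Theorem~\ref{reductionToP-VC-3}. The only point requiring care is the bookkeeping of the additive forced-move term $c$ — one must verify that subtracting it from $k$ leaves a parameter still bounded by $k$, so that the $f(k)$ factor of the vertex-cover algorithm remains a function of the ORPG parameter, and that $c$, while possibly large, is computable in polynomial time and does not inflate the running time beyond the FPT form. A secondary caveat, noted in Remark~\ref{rem:WhatPolynomial}, is that under the bit model the reduction runs in time polynomial in the number of input bits rather than in $|V(G)|$; this does not affect FPT membership, provided the polynomial factor is measured against the true input size.
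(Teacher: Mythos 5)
Your proof is correct and follows essentially the same route as the paper: apply the reduction of Theorem~\ref{reductionToP-VC-3}, account for the forced faces separately so the vertex-cover parameter becomes $k-c\le k$, and run an FPT vertex-cover algorithm on the resulting instance. The only difference is that you invoke the generic $O(2^{k'}\,|\hat G|)$ bounded-search-tree algorithm, while the paper cites Xiao's maximum-degree-3 algorithm (base $1.1616$) and Alber et al.'s subexponential planar vertex-cover algorithm to get sharper running times --- a refinement you yourself note parenthetically.
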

\begin{proof}
Perform the reduction of Theorem \ref{reductionToP-VC-3}, producing a planar maximum degree 3 vertex cover instance $\hat{G}$ with $V(\hat{G}) = \hat{n}$

Using the FPT algorithm by Xiao \cite{Xiao10_FPT} for maximum degree 3 vertex cover on $\hat{G}$,
we can compute an obstacle representation for $G$ with $k$ obstacles in additional time at most
$1.1616^{k - |F_0|} \hat{n}^{O(1)}
= 1.1616^{k - |F_0|} n^{O(1)}
= 1.1616^{k} n^{O(1)}$.

Alternatively, using the FPT algorithm by Alber et al. \cite{AFN01} for planar vertex cover on $\hat{G}$, we can
to compute an obstacle representation for $G$ with $k$ obstacles in additional time at most
$O(2^{4 \sqrt{3 (k - |F_0|)}} \hat{n})
= O(2^{4 \sqrt{3 (k - |F_0|)}} n)
= O(2^{4 \sqrt{3 k}} n)$.
\end{proof}

\bibliographystyle{plain}

\begin{thebibliography}{10}

\bibitem{AFN01}
Jochen Alber, Henning Fernau, and Rolf Niedermeier.
\newblock Parameterized complexity: Exponential speed-up for planar graph
  problems.
\newblock In {\em in Electronic Colloquium on Computational Complexity (ECCC)},
  pages 261--272. Springer, 2001.

\bibitem{AKL09}
Hannah Alpert, Christina Koch, and Joshua Laison.
\newblock Obstacle numbers of graphs.
\newblock {\em Discrete {\&} Computational Geometry}, 44:223--244, July 2010.
\newblock www.springerlink.com/content/45038g67t22463g5.

\bibitem{Baker94PTAS}
Brenda~S. Baker.
\newblock Approximation algorithms for np-complete problems on planar graphs.
\newblock {\em J. ACM}, 41:153--180, January 1994.

\bibitem{FPP90}
Hubert de~Fraysseix, J{\'a}nos Pach, and Richard Pollack.
\newblock How to draw a planar graph on a grid.
\newblock {\em Combinatorica}, 10(1):41--51, 1990.

\bibitem{Kobourov11}
Christian~A. Duncan, Emden~R. Gansner, Yifan Hu, Michael Kaufmann, and
  Stephen~G. Kobourov.
\newblock Optimal polygonal representation of planar graphs.
\newblock {\em CoRR}, abs/1104.1482, 2011.

\bibitem{GJ77}
Michael~R. Garey and David~S. Johnson.
\newblock The rectilinear {S}teiner tree problem is {NP}-complete.
\newblock {\em SIAM Journal of Applied Mathematics}, 32:826--834, 1977.

\bibitem{Mohar97}
Bojan Mohar.
\newblock Circle packings of maps in polynomial time.
\newblock {\em European Journal of Combinatorics}, 18(7):785--805, 1997.

\bibitem{SariozCCCG11}
Deniz Sar{\i}\"oz.
\newblock Approximating the obstacle number for a graph drawing efficiently.
\newblock In {\em Proc. 23rd Canadian Conference on Computational Geometry
  (CCCG '11)}, pages 297--302, 2011.

\bibitem{vanKreveld10}
Marc~J. van Kreveld.
\newblock Bold graph drawings.
\newblock In {\em Proc. 21st Canadian Conference on Computational Geometry
  (CCCG '09)}, pages 119--122, 2010.

\bibitem{Xiao10_FPT}
Mingyu Xiao.
\newblock A note on vertex cover in graphs with maximum degree 3.
\newblock In My~Thai and Sartaj Sahni, editors, {\em Computing and
  Combinatorics}, volume 6196 of {\em Lecture Notes in Computer Science}, pages
  150--159. Springer Berlin / Heidelberg, 2010.

\end{thebibliography}
\end{document}